\numberwithin{equation}{section}
\newtheorem{theorem}{Theorem}[section]
\newtheorem{lemma}{Lemma}[section]
\theoremstyle{definition}
\begin{document}

\title{Adiabatics using quantum action}

\author{M.~V.~Karasev
\\
{\small Department of Applied Mathematics}
\\
{\small National Research University Higher School of Economics}
\\
{\small E-mail: mkarasev@hse.ru}}

\date{}

\maketitle

\begin{abstract}
For slow--fast quantum systems, we compute first corrections
to the quantum action and to the effective slow Hamiltonian.
\end{abstract}

\section{Introduction}

The phase space of a classical adiabatic system is represented as
the direct product of a ``fast'' fiber and a ``slow'' base
which are distinguished from each other algebraically
by the presence of a small parameter
(let us call it $\varepsilon$)
in the mutual Poisson brackets between coordinates on the slow phase manifold.
Quantum adiabatic systems are organized in a similar way
but the notion of Poisson brackets is replaced, in this case,
by the notion of commutator, thus the mutual commutators
between slow quantum coordinates
in such systems are proportional to the small parameter~$\varepsilon$.

The guiding idea of the adiabatic approximation method
can be declared as follows:
try, asymptotically in $\varepsilon$, to extract quantum mechanics
along slow coordinates only
and clean it from fast coordinates as much as possible.

Since the fast quantum eigenstates depend on slow coordinates,
then there generally appears a correlation between fast and slow ``directions'' already
at the first order of the adiabatic asymptotic expansion
with respect to~$\varepsilon$.
The corrections of order $\varepsilon$
to the leading Born--Oppenheimer adiabatic terms can be computed
in several different manners~[1--6].

The problem is that these corrections are operators still acting in fast directions,
e.g., they interlace different adiabatic energy levels, and so,
one cannot say that we have up to $O(\varepsilon^2)$ an effective quantum adiabatic Hamiltonian
over the slow coordinate algebra.

The known schemes (see, e.g., [7--11]) allowing to make these
$\varepsilon$-corrections to be operators in slow directions work well
if the fast direction is low-dimensional and described by matrices
(as it happens for the Pauli, Dirac, or Maxwell equations).
But if the fast direction is infinite-dimensional
and described by differential operators
(as it happens for systems with light and heavy particles,
or for systems with fast Larmor vortices, etc.), then
the matrix point of view fails.
It is very ineffective, nongeometric, and nonalgebraic to deal with
infinite-dimensional matrices; the situation here is very similar to that
of old  matrix quantum mechanics.

Another inconvenience of the usual adiabatic approaches
is that they are based on the information
about the fast eigenstates of the system.
If these states are just eigenvectors of 2-, 3-,\dots, 6-dimensional matrices,
then this is not a bid difficulty.
But if one needs to compute eigenstates of a differential operator,
and variations of these fast states provide contributions
to the final effective Hamiltonian,
then such an adiabatic approach occurs to be unsatisfactory.

At the same time in classical mechanics, there are adiabatic approximation schemes
[12--17] which allow, under some conditions, to exclude fast directions and compute
the effective adiabatic Hamiltonian on the slow phase manifold
in an elegant geometric manner.

The aim in the present paper is to develop the quantum version of these classical adiabatic
approximation schemes. We continue and generalize the approach of [18]
introducing the notion of quantum action.
The result can, in short, be formulated as follows:
the adiabatic Hamiltonians act on the eigensubspaces of the quantum
action.
Thus, in our work, the universal Plank concept of ``action'' replaces
the customary method of matrix diagonalization
and provides an opportunity for solving the quantum adiabatic problem
in the case of infinite-dimensional fast direction.

The main objects taken from [18] are noncommutative-deformed slow coordinates
which commute with the quantum action and satisfy the canonical commutation relations.
They can be considered
as a certain analog of the Landau guiding center coordinates.
Our final effective adiabatic Hamiltonian
is expressed via these noncommutative-deformed slow coordinates.
So, we do not keep the initial slow phase manifold unchanged,
but deform it.
This deformation is a necessary step to separate the fast directions
and to extract the slow dynamics.

The averaged values of the noncommutative-deformed slow coordinates, in general,
do not coincide with the classical ones.
Thus the above-mentioned analogy with the Landau guiding center coordinates is not complete.
In our case, the deformation indeed changes the average geometry of the slow space.
One has to stress that the word ``geometry'' is not used occasionally here:
the noncommutative-deformed slow coordinates
are changed by the usual differential geometric rule
under changing the  classical slow coordinates.

Note that all these deformations and computations of asymptotics
we make up to $O(\varepsilon^2)$ only for simplicity.
Also note that the presence of the small parameter $\varepsilon$
in the commutators between slow coordinates implies that one can apply
the semiclassical asymptotics theory to analyze the obtained effective quantum
adiabatic Hamiltonian.

Moreover, one can try to develop the whole quantum adiabatic
approximation scheme from the very beginning
in combination with the semiclassical asymptotics (as in [18]).
But in the present paper we do not
follow this way; the advantages of our general approach is that we
do not restrict ourselves to specific assumptions required to
apply the semiclassical theory (like the integrability,
compactness or noncompactness of energy levels, absence of
separatrix, etc.).

Another important remark:
we do not assume that in fast coordinates the system has any semiclassical behavior.
So our general results do not use any Poisson or symplectic geometry
along fast fibers.
But, of course, we can also consider the totally semiclassical case:
with an additional parameter $\hbar$ at fast commutators,
and the corresponding small parameter $\varepsilon\hbar$
at slow commutators.
In this case, our general quantum results make possible
the more systematic derivation and reorganization
of the Plank-type quantization conditions previously obtained in [18]
in the framework of the semiclassical adiabatic approximation.

In the classical limit $\hbar\to0$, when the commutator
$\frac{i}{\hbar}[\cdot,\cdot]$ becomes the Poisson brackets,
our quantum adiabatic approximation results coincide with those
known in classical mechanics.

\section{Adiabatic approximation via quantum ``action''}

Let us consider the case of one degree of freedom in the fast direction.

We denote by $\mathcal{A}=\mathcal{A}(\mathbf{R}^2)$
an algebra of ``fast'' operators of the type $\mathbb{F}=F(\mathbb{A})$,
where the function $F$ with all derivatives
has no more than a polynomial growth at infinity, and the set
$\mathbb{A}=(\mathbb{A}_1,\mathbb{A}_2)$ consists of two self-adjoint operators
$\mathbb{A}_1$, $\mathbb{A}_2$
representing the Heisenberg algebra in some Hilbert space, i.e.,
\begin{equation}
[\mathbb{A}_1,\mathbb{A}_2]=i.
\end{equation}
These operators are assumed to be Weyl-symmetrized when substituted into the function $F$,
that is,
\begin{equation}
F(\mathbb{A})\overset{\text{def}}{=}
\frac1{2\pi}\int_{\mathbf{R}^2}\tilde{F}(t)\exp\{it\cdot\mathbb{A}\}\,dt,
\qquad
\tilde{F}(t)\overset{\text{def}}{=}\int_{\mathbf{R}^2}F(A) e^{-itA}\,dA.
\end{equation}
The function $F$ is called the Weyl symbol of the operator $F(\mathbb{A})$.

We also consider a ``slow'' phase space $\mathcal{D}\approx \mathbf{R}^{2n}$
with the set of Darboux coordinates $x=(x^1,\dots,x^{2n})$, so that
\begin{equation}
\{x^j,x^l\}=J^{jl},\qquad
J=\left(\begin{matrix}0&-I\\I&0\end{matrix}\right)
\end{equation}
The ``slow'' quantum operators $\hat{x}^j$ are assumed to obey
the commutation relations with a small parameter $\varepsilon$:
\begin{equation}
[\hat{x}^j,\hat{x}^l]=-i\varepsilon J^{jl},
\end{equation}
and these operators commute with all fast operators
\begin{equation}
[\hat{x}^j,\mathbb{A}_k]=0.
\end{equation}

The quantum adiabatic system which we deal with has the Hamiltonian
\begin{equation}
\hat{\mathbb{H}}=\mathbb{H}(\hat{x}),\qquad
\mathbb{H}(x)\in\mathcal{A},
\end{equation}
where the slow operators $\hat{x}=(\hat{x}^1,\dots,\hat{x}^{2n})$
are Weyl-symmetrized.

The function $\mathbb{H}=\mathbb{H}(x)$ in (2.6) is usually called
the ``operator-valued symbol''.
We are here interested in the case where $\mathbb{H}(x)$, at each $x$,
has discrete spectrum and the eigenvalues $\lambda_k(x)$ called ``terms''
are smooth functions of $x$.

Note that all terms $\lambda_k(x)$ are automatically nondegenerate, i.e.,
their multiplicity equals~$1$, since the algebra (2.1) has only one degree of freedom.

The adiabatic approximation is, generally speaking, targeted to a reduction
of the Hamiltonian (2.6) over the direct product of fast and slow algebras (2.1), (2.4)
to some effective Hamiltonian over the slow algebra only.
The first step is the reduction to the quantized terms:
\begin{equation}
\hat{\lambda}_k+O(\varepsilon),\qquad
\hat{\lambda}_k\overset{\text{def}}{=}\lambda_k(\hat{x}),
\end{equation}
and secondly to compute the higher order corrections to the leading terms
$\lambda_k$ (the effective adiabatic Hamiltonians):
\begin{equation}
\hat{\lambda}_k+\varepsilon\hat{\mu}_k+O(\varepsilon^2).
\end{equation}

The rough reduction from (2.6) to (2.7) is made, for instance,
by the Born--Oppenheimer method.
But more precise reduction to (2.8) and explicit computation
of the scalar Hamiltonians $\mu_k$ over the slow phase space is a not simple question.

Let us ask: How one can obtain the terms $\lambda_k(x)$
from the operator-valued symbol $\mathbb{H}(x)$?
The routine answer is to use the eigenprojection
$\mathbb{P}_k(x)$ (or the eigenvectors, as in the original
Born--Oppenheimer scheme):
\begin{equation}
\mathbb{H}(x)\mathbb{P}_k(x)=\lambda_k(x)\mathbb{P}_k(x),
\qquad k=0,1,2,\dots\,.
\end{equation}
By quantizing (2.9) and using relations (2.4) one derives
\begin{equation*}
\hat{\mathbb{H}}\hat{\mathbb{P}}_k
=\hat{\mathbb{P}}_k\hat{\lambda}_k+O(\varepsilon).
\end{equation*}
Thus, by the intertwining operator $\hat{\mathbb{P}}_k=\mathbb{P}_k(\hat{x})$,
the Hamiltonian (2.6) is transformed to (2.7).
But the next step, in the order $\varepsilon$,
fails to reach (2.8) on this way.

The defect is that each relation (2.9) deals only with a separate labelling number $k$
and does not allow one to join and consider all these numbers simultaneously.

A natural object combining all the labelling numbers $k$ from (2.9) is the family
of action operators
\begin{equation*}
\mathbb{S}(x)\overset{\text{def}}{=}\sum_k k\cdot\mathbb{P}_k(x),\qquad x\in\mathcal{D}.
\end{equation*}
It has the integer-valued spectrum and its one-parameter group satisfies
the $2\pi$-periodicity condition:
\begin{equation}
\exp\{2\pi i\mathbb{S}(x)\}=\mathbf{1}.
\end{equation}
Here $\mathbf{1}$ is the unity element (operator) in $\mathcal{A}$.

Let us assume that the operator-valued symbol $\mathbb{H}(x)$ is bounded
from below and its eigenvalues are ordered as follows:
\begin{equation*}
\lambda_0(x)<\lambda_1(x)<\dots\,.
\end{equation*}
In the same way, one can consider the case where the eigenvalues
are not bounded from below.

We choose a smooth function $f=f(s,x)$, strongly monotonic in $s$, such that
\begin{equation}
\lambda_k(x)=f(k,x).
\end{equation}
Then one obtains the representation
\begin{equation}
\mathbb{H}(x)=f(\mathbb{S}(x),x).
\end{equation}

Actually such a representation of the fast Hamiltonian one can have a priori,
i.e., without any references to the eigenprojections,
just keeping the key property (2.10) of the action operators.
Such an action determined by (2.10) is
the operator analog of labelling numbers.
And representation (2.12) can in fact be considered as a fundamental property
of the system. Representations of this type are widely used
in semiclassical and classical mechanics.

Now let us quantize relations (2.10), (2.12) by substituting  $x\to\hat{x}$.
We assume everywhere below, without special mentioning,
that all used scalar or $\mathcal{A}$-valued functions in slow coordinates
are smooth and belong to suitable classes allowing one to quantize these functions
over the Heisenberg algebra (2.4).

Note that, after quantization, the operator
$\hat{\mathbb{S}}=\mathbb{S}(\hat{x})$
does not obey relation (2.10) exactly,
the discrepancy order is $O(\varepsilon)$.
But one can try to construct a correct ``quantum action''
\begin{equation}
\hat{\mathbb{S}}_\varepsilon=\mathbb{S}(\hat{x})+\varepsilon\mathbb{W}(\hat{x})
+O(\varepsilon^2)
\end{equation}
to preserve the periodicity:
\begin{equation}
\exp\{2\pi i\hat{\mathbb{S}}_\varepsilon \}=\mathbf{1}.
\end{equation}
Then the eigenvalues of $\hat{\mathbb{S}}_\varepsilon$
are integer: $\operatorname{spectrum}(\hat{\mathbb{S}}_\varepsilon)
=\{k\mid k=0,1,2,\dots\}$
and the eigensubspaces of $\hat{\mathbb{S}}_\varepsilon$
can be used to separate adiabatic bands of the
Hamiltonian $\hat{\mathbb{H}}$.

How $\hat{\mathbb{H}}$ is related to $\hat{\mathbb{S}}_\varepsilon$?
From (2.12) and (2.13) we have
$$
\mathbb{H}(x)=f(\mathbb{S}_\varepsilon(x),x)
-\varepsilon\partial f(\mathbb{S}_\varepsilon,x)\mathbb{W}(x)+O(\varepsilon^2),
$$
where $\partial\overset{\text{def}}{=}\partial/\partial s$
and the Weyl symmetrization of $\mathbb{S}_\varepsilon$ and $\mathbb{W}$ is assumed.
Therefore, after quantization,
\begin{equation}
\hat{\mathbb{H}}=f(\hat{\mathbb{S}}_\varepsilon,\hat{x})
+\varepsilon\hat{\mathbb{Q}}
-\varepsilon\partial f(\hat{\mathbb{S}}_\varepsilon,\hat{x})
\hat{\mathbb{W}}+O(\varepsilon^2),
\end{equation}
where on the right-hand side of (2.15) all the operators are Weyl-symmetrized,
and the correction $\mathbb{Q}$ is described in Appendix~A.

In (2.15) we know, up to $O(\varepsilon^2)$, the spectrum of
$\hat{\mathbb{S}}_\varepsilon$, but the difficulty is
that $\hat{x}$ and $\hat{\mathbb{W}}$ do not commute
with $\hat{\mathbb{S}}_\varepsilon$ and so it is impossible to restrict
the right-hand side of (2.15) to eigensubspaces
of $\hat{\mathbb{S}}_\varepsilon$.

To eliminate this obstruction in the order $\varepsilon$,
let us first deform the slow coordinates
\begin{equation}
\mathbb{X}_\varepsilon=x+\varepsilon \mathbb{Y}+O(\varepsilon^2)
\end{equation}
under the condition
\begin{equation}
[\hat{\mathbb{S}}_\varepsilon,\hat{\mathbb{X}}_\varepsilon]=O(\varepsilon^2).
\end{equation}
Here $\mathbb{Y}$ is some vector field on $\mathcal{D}$
with values in $\mathcal{A}$.

Then (2.15) is transformed to
\begin{equation}
\hat{\mathbb{H}}=f(\hat{\mathbb{S}}_\varepsilon,\hat{\mathbb{X}}_\varepsilon)
+\varepsilon\Big(\hat{Q}
-\partial f(\hat{\mathbb{S}}_\varepsilon,\hat{\mathbb{X}}_\varepsilon)\hat{\mathbb{W}}
-Df(\hat{\mathbb{S}}_\varepsilon,\hat{\mathbb{X}}_\varepsilon)\hat{\mathbb{Y}}\Big)
+O(\varepsilon^2),
\end{equation}
where $D=\partial/\partial x$ and all operators in (2.18) are Weyl-symmetrized.

The leading term in (2.18) can already be restricted (up to $O(\varepsilon^2)$)
to the $k$th eigensubspace of $\hat{\mathbb{S}}_\varepsilon$
and provides the reduction of the Hamiltonian (2.6)
to $f(k,\hat{\mathbb{X}}_\varepsilon)+O(\varepsilon)$
with slow coordinates $x$ replaced by
their quantum deformations~$\mathbb{X}_\varepsilon$.

Now the following two questions arise:

--- What are mutual commutation relations between $\mathbb{X}_\varepsilon^j$?
Can these deformed coordinates be actually used
as analogs of slow classical coordinates?

--- How can one deal with the $\varepsilon$-order correction in (2.18) which still
does not commute with $\hat{\mathbb{S}}_\varepsilon$?

\section{Noncommutative-deformed slow coordinates}

From relations (2.16), (2.17), the equation
for the $\mathcal{A}$-valued vector field $\mathbb{Y}$ reads
\begin{equation}
i[\mathbb{S},\mathbb{Y}]=JD\mathbb{S}.
\end{equation}
By taking (2.10) into account one derives the solution
\begin{equation}
\mathbb{Y}=\tilde{\mathbb{Y}}+\underline{\mathbb{Y}},\qquad
\tilde{\mathbb{Y}}\overset{\text{def}}{=}J(D\mathbb{S})^{\#},\qquad
[\mathbb{S},\underline{\mathbb{Y}}]=0,
\end{equation}
where by $\#$ we denote the operation
\begin{equation}
\mathbb{T}^{\#}\overset{\text{def}}{=}\frac1{2\pi}
\int^{2\pi}_0\exp\{it\mathbb{S}\}\cdot\mathbb{T}\cdot\exp\{-it\mathbb{S}\}(t-\pi)\,dt.
\end{equation}

To prove formula (3.2) one needs to mention the relation
\begin{equation}
(D\mathbb{S})^{\&}=0,
\end{equation}
where the sign $\&$ denotes the averaging operation
\begin{equation}
\mathbb{T}^{\&}\overset{\text{def}}{=}
\int^{2\pi}_0\exp\{it\mathbb{S}\}\cdot\mathbb{T}\cdot\exp\{-it\mathbb{S}\}\,dt.
\end{equation}
Identity (3.4) follows from (2.10) by applying the derivatives with respect to~$x$.

Let us note that operations (3.3) and (3.5) are related to each other as follows:
\begin{gather*}
(\mathbb{T}^{\#})^{\&}=(\mathbb{T}^{\&})^{\#}=0,\qquad
(\mathbb{T}^{\&})^{\&}=\mathbb{T}^{\&},
\\
i[\mathbb{S},\mathbb{T}]^{\#}=i[\mathbb{S},\mathbb{T}^{\#}]=\mathbb{T}-\mathbb{T}^{\&},
\qquad
[\mathbb{S},\mathbb{T}^{\&}]=[\mathbb{S},\mathbb{T}]^{\&}=0.
\end{gather*}
In particular, in (3.2) we have
\begin{equation}
\underline{\mathbb{Y}}=\mathbb{Y}^{\&}.
\tag{3.2a}
\end{equation}

Note that the $\mathcal{A}$-values vector field $\underline{\mathbb{Y}}$ in (3.2)
whose components commute with $\mathbb{S}$ cannot be computed from equation (3.1).
There must be some other conditions to determine this field.
We describe them in Theorem~3.1 below.

Now let us look at commutators between noncommutatively deformed slow coordinates.
From (2.16) one derives
\begin{equation}
[\hat{\mathbb{X}}_\varepsilon^j,\hat{\mathbb{X}}_\varepsilon^l]
=-i\varepsilon J^{jl}
-i\varepsilon^2\widehat{\mathbb{K}^{jl}}+O(\varepsilon^3),
\end{equation}
where
\begin{equation}
\mathbb{K}^{jk}=\tilde{\mathbb{K}}^{jk}+\underline{\mathbb{K}}^{jk},
\end{equation}
\begin{equation}
\tilde{\mathbb{K}}^{jl}\overset{\text{def}}{=}
J^{jm}\nabla_m\tilde{\mathbb{Y}}^{l} -J^{lm}\nabla_m\tilde{\mathbb{Y}}^{j}
-i[\tilde{\mathbb{Y}}^{j},\tilde{\mathbb{Y}}^{l}],
\end{equation}
and
\begin{equation}
\underline{\mathbb{K}}^{jl}\overset{\text{def}}{=}
J^{jm}\nabla_m\underline{\mathbb{Y}}^l-J^{lm}\nabla_m\underline{\mathbb{Y}}^j.
\end{equation}
Here the summation by repeated up and low indices is assumed,
and we use the covariant derivative notation
$$
\nabla\mathbb{T}\overset{\text{def}}{=}
D\mathbb{T}+iJ^{-1}[\tilde{\mathbb{Y}},\mathbb{T}].
$$

Since the fast spectrum does not degenerate,
it follows from (3.2) that $\underline{\mathbb{Y}}^j$
are functions of $\mathbb{S}$, i.e.,
\begin{equation}
\underline{\mathbb{Y}}^j(x)=y^j(\mathbb{S}(x),x).
\end{equation}
Therefore, from (3.9) and taking into account the properties
\begin{equation*}
(\nabla_l\mathbb{T})^{\&}=\nabla_l(\mathbb{T}^{\&}),\qquad
(\nabla_l\mathbb{T})^{\#}=\nabla_l(\mathbb{T}^{\#}),\qquad
\nabla\mathbb{S}=0,
\end{equation*}
we obtain
\begin{equation}
\underline{\mathbb{K}}^{jl}=J^{jm}D_m y^l-J^{lm}D_m y^j
\end{equation}
and, in particular,
\begin{equation}
[\underline{\mathbb{K}}^{jl},\mathbb{S}]=0.
\end{equation}

On the other hand, by applying the commutator with
$\hat{\mathbb{S}}_\varepsilon$ to both sides of (3.6) and taking (2.17) into account,
we see that the whole $\mathcal{A}$-valued tensor $\mathbb{K}^{jl}$ in (3.6)
also commutes with the action $[\mathbb{K}^{jl},\mathbb{S}]=0$.
Thus from (3.12) and (3.7) we derive $[\tilde{\mathbb{K}}^{jl},\mathbb{S}]=0$, and so,
\begin{equation}
\tilde{\mathbb{K}}^{jl}(x)=\tilde{k}^{jl}(\mathbb{S}(x),x).
\end{equation}

Note that by applying the averaging operation (3.5) to both sides of (3.8),
we can kill the first two summands in the right-hand side of (3.8)
and obtain the representation for the contravariant tensor
$\tilde{k}=\tilde{k}(\mathbb{S},x)$ (3.13):
\begin{equation}
\tilde{k}=-i[\tilde{\mathbb{Y}},\tilde{\mathbb{Y}}]^{\&}
= iJ[D\mathbb{S}^{\#},D\mathbb{S}^{\#}]^{\&} J.
\end{equation}

Now one can introduce the covariant tensor
\begin{equation}
\omega\overset{\text{def}}{=} J^{-1}\cdot \tilde{k}\cdot J^{-1}.
\end{equation}
Then by applying the commutator with $\hat{\mathbb{X}}^m_\varepsilon$
to both sides of  3.6) and using the Jacobi identity for double commutators we
conclude that the differential 2-form
(similar to the well-known Berry--Simon curvature form~[5, 6])
\begin{equation}
\Omega =\frac12\omega_{lj}dx^j\wedge dx^l
\end{equation}
is closed on $\mathcal{D}$, and thus it is exact:
\begin{equation}
\Omega=d\theta,\qquad
\omega_{lj}=D_j\theta_l-D_l\theta_j.
\end{equation}

Therefore by choosing in (3.10), (3.11)
\begin{equation}
y=J\theta
\end{equation}
and using representation (3.7) one obtains $\mathbb{K}=0$.

Thus relations (3.6) read
\begin{equation}
[\hat{\mathbb{X}}^j_\varepsilon,\hat{\mathbb{X}}^l_\varepsilon]
=-i\varepsilon J^{jl}+O(\varepsilon^3).
\end{equation}

We prove the following theorem.

\begin{theorem}
Let us determine the differential $2$-form {\rm(3.16)}
on the slow phase space by choosing $\omega_{jk}(s,x)$ from relations {\rm(3.14), (3.15)}.
Then take the primitive $1$-form $\theta$ {\rm(3.17)},
define the vector field $y=y(s,x)$ on the slow phase space by {\rm(3.18)},
and choose the operator-values vector field $\underline{\mathbb{Y}}$
in {\rm(3.2)} from {\rm(3.10)}.
Then the noncommutatively deformed coordinates {\rm(2.16)}
on the slow phase space satisfy, up to $O(\varepsilon^3)$,
the canonical commutation relations {\rm(3.19)} given
by the tensor $J$,
as well as commute with the quantum action
$\hat{\mathbb{S}}_\varepsilon$ up to $O(\varepsilon^2)$
{\rm(2.17)}.
\end{theorem}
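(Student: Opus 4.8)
The plan is to run through the chain of constructions in (3.1)--(3.18), checking that the choices made there are mutually consistent and deliver both assertions, (3.19) and (2.17); in essence the proof assembles the computations of Section~3 into one closed argument. First I would recover the defining equation (3.1) for $\mathbb{Y}$. Writing $\hat{\mathbb{S}}_\varepsilon=\mathbb{S}(\hat x)+\varepsilon\mathbb{W}(\hat x)+O(\varepsilon^2)$ and $\hat{\mathbb{X}}_\varepsilon^j=\hat x^j+\varepsilon\mathbb{Y}^j(\hat x)+O(\varepsilon^2)$ and expanding $[\hat{\mathbb{S}}_\varepsilon,\hat{\mathbb{X}}_\varepsilon^j]$, the order-$\varepsilon$ part has exactly two sources: the ``slow'' commutator $[\mathbb{S}(\hat x),\hat x^j]$, which is $O(\varepsilon)$ by (2.4) and equals $\varepsilon$ times the quantization of $(JD\mathbb{S})^j$ up to the sign of the quantization map, and the ``fast'' piece $\varepsilon\,\widehat{[\mathbb{S},\mathbb{Y}^j]}$ from the non-commutativity inside $\mathcal{A}$ (while $\varepsilon[\mathbb{W}(\hat x),\hat x^j]$ is already $O(\varepsilon^2)$). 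Imposing (2.17) forces $i[\mathbb{S},\mathbb{Y}^j]=(JD\mathbb{S})^j$, which is (3.1). I would then solve it with the operator calculus of $\#$ and $\&$ in (3.3), (3.5): the solvability condition $(D\mathbb{S})^{\&}=0$ is (3.4), obtained by differentiating the periodicity (2.10), and the identity $i[\mathbb{S},\mathbb{T}^{\#}]=\mathbb{T}-\mathbb{T}^{\&}$ gives the particular solution $\tilde{\mathbb{Y}}=J(D\mathbb{S})^{\#}$; the general solution is (3.2) with an arbitrary homogeneous term $\underline{\mathbb{Y}}$ commuting with $\mathbb{S}$.

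Next I would compute the bracket of the deformed coordinates. Substituting (2.16) into $[\hat{\mathbb{X}}_\varepsilon^j,\hat{\mathbb{X}}_\varepsilon^l]$ and keeping terms through order $\varepsilon^2$ yields (3.6) with the $\mathcal{A}$-valued tensor $\mathbb{K}^{jl}$ decomposed as in (3.7)--(3.9) via the covariant derivative $\nabla$. Commuting both sides of (3.6) with $\hat{\mathbb{S}}_\varepsilon$ and using (2.17) shows $[\mathbb{K}^{jl},\mathbb{S}]=0$; since the fast spectrum is simple, $\underline{\mathbb{Y}}^j$ is a function of $\mathbb{S}$ (formula (3.10)), so by the properties of $\nabla$ listed before (3.12) one gets $[\underline{\mathbb{K}}^{jl},\mathbb{S}]=0$, hence $[\tilde{\mathbb{K}}^{jl},\mathbb{S}]=0$ and $\tilde{\mathbb{K}}^{jl}=\tilde k^{jl}(\mathbb{S},x)$ by functional calculus. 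Applying $\&$ to (3.8) annihilates the two $\nabla\tilde{\mathbb{Y}}$ terms, because $\tilde{\mathbb{Y}}^{\&}=0$ and $\nabla$ commutes with $\&$, and leaves the closed formula (3.14) for $\tilde k$; choosing a smooth interpolation in the spectral variable one gets a numerical tensor $\tilde k(s,x)$, and $\omega=J^{-1}\tilde k J^{-1}$ of (3.15) a smooth $s$-family of numerical $2$-forms $\Omega(s,\cdot)$ on $\mathcal{D}$.

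The geometric core is the closedness of $\Omega$ in $x$. I would obtain it from the operator Jacobi identity: inserting (3.6), expanded one order further, into $[[\hat{\mathbb{X}}_\varepsilon^j,\hat{\mathbb{X}}_\varepsilon^l],\hat{\mathbb{X}}_\varepsilon^m]+(\text{cyclic in }j,l,m)=0$, the vanishing of the $\varepsilon^3$-coefficient of the cyclic sum collapses, after the $\mathbb{S}$-commuting reductions above, to the antisymmetrized identity $D_{[m}\omega_{lj]}=0$, i.e. $d\Omega=0$. Since $\mathcal{D}\approx\mathbf{R}^{2n}$ is contractible, the Poincaré lemma (with smooth dependence on $s$) furnishes a primitive $\theta=\theta(s,x)$ with $\omega_{lj}=D_j\theta_l-D_l\theta_j$, which is (3.17). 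The choice $y=J\theta$ in (3.10)--(3.11) then, via (3.12) and (3.17), produces $\mathbb{K}=0$ in (3.7) (the cancellation $\underline{\mathbb{K}}=-\tilde{\mathbb{K}}$), so (3.6) becomes the canonical relations (3.19); and since this $\underline{\mathbb{Y}}^j=y^j(\mathbb{S},x)$ commutes with $\mathbb{S}$ it is an admissible homogeneous term in (3.2), so $\mathbb{Y}$ still solves (3.1) and (2.17) persists up to $O(\varepsilon^2)$. Together, (3.19) and (2.17) are the conclusion.

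I expect the decisive obstacle to be this closedness step: one must expand (3.6) reliably to order $\varepsilon^3$, keeping track both of the fast commutators in $\mathcal{A}$ and of the $\hat x$-noncommutativity corrections, and then recognize that the cyclic Jacobi sum, a priori a bulky $\mathcal{A}$-valued expression, reduces after the reductions forced by $[\mathbb{K},\mathbb{S}]=0$ precisely to the scalar statement $d\Omega=0$. A related delicate point is the passage from the $\mathcal{A}$-valued $\tilde{\mathbb{K}}$ to a genuine $s$-family of numerical $2$-forms: one has to confirm that every object entering (3.14) is a function of $\mathbb{S}$ alone, and that the interpolation $\tilde k(s,x)$, hence $\theta(s,x)$, can be chosen smoothly, so that the Poincaré lemma applies uniformly in $s$.
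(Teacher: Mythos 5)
Your proposal is correct and follows essentially the same route as the paper: equation (3.1) and its solution via the $\#$ and $\&$ operations, the computation (3.6)--(3.9) of the deformed commutators, the reduction to a function of $\mathbb{S}$ via $[\mathbb{K}^{jl},\mathbb{S}]=0$, the closedness of $\Omega$ from the Jacobi identity applied to (3.6), and the cancellation $\mathbb{K}=0$ by the choice $y=J\theta$. The only difference is that you explicitly flag the order-$\varepsilon^3$ bookkeeping in the Jacobi-identity step and the smooth interpolation in the spectral variable, points the paper passes over silently.
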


Note that the transformation $x\to\mathbb{X}_\varepsilon$
can be interpreted as a nearly identical map of the original slow phase space
to a new one embedded into the whole fast--slow fibration.

\section{Computation of the quantum action}

Now we compute the correlation $\mathbb{W}$
to the action operator $\mathbb{S}$ in~(2.13).
The basic equation is (2.14).
From the general Lemma~A.1 (see in Appendix) one derives the formula for the Weyl symbol
of the exponent (2.14):
\begin{equation}
\exp\{2\pi i\hat{\mathbb{S}}_\varepsilon\}=\hat{\mathbb{U}}_\varepsilon,
\end{equation}
where
\begin{equation}
\mathbb{U}_\varepsilon=\exp\Big\{2\pi i\big(\mathbb{S}+\varepsilon\mathbb{W}
-i\varepsilon L+O(\varepsilon^2)\big)\Big\}1(x),
\end{equation}
$$
L\overset{\text{def}}{=} D\mathbb{S}(x)\cdot J D/2.
$$

After expansion of the exponential function
in (4.2) by known formulas of the perturbation theory, we obtain
\begin{equation*}
\mathbb{U}_\varepsilon=e^{2\pi i\mathbb{S}}
\Big(1-i\varepsilon L^{\&}
+\varepsilon\mathbb{W}^{\&}+O(\varepsilon^2)\Big)1,
\end{equation*}
where the operation $\&$ is determined by (3.5).
By taking identity (2.10) into account one computes
\begin{equation}
\mathbb{U}_\varepsilon=1+\varepsilon\mathbb{W}^{\&}
-\frac{i\varepsilon}{4\pi}
\int^{2\pi}_{0}e^{it\mathbb{S}}D\mathbb{S}JD(e^{-it\mathbb{S}})\,dt+O(\varepsilon^2).
\end{equation}

The basic equation (2.14) and relation (4.1) imply
$\mathbb{U}_\varepsilon\equiv 1+O(\varepsilon^2)$
and so it follows from (4.3) that
\begin{equation}
\mathbb{W}^{\&}=\frac{i}{4\pi}
\int^{2\pi}_{0}e^{it\mathbb{S}}D\mathbb{S}JD(e^{-it\mathbb{S}})\,dt
\end{equation}

From equation (3.1) we know that
\begin{equation}
JD(e^{-it\mathbb{S}})=-i[\mathbb{Y},e^{-it\mathbb{S}}].
\end{equation}
Thus the integral in (4.4) is transformed to
$$
\frac1{4\pi}\int^{2\pi}_{0}e^{it\mathbb{S}}D\mathbb{S}
[\mathbb{Y},e^{-it\mathbb{S}}]\,dt
=\frac12(D\mathbb{S}\cdot\mathbb{Y})^{\&}-\frac12(D\mathbb{S})^{\&}\cdot\mathbb{Y}
=\frac12(D\mathbb{S}\cdot\mathbb{Y})^{\&},
$$
where we used (3.4)  in the last equality.

Therefore, (4.4) reads
$\mathbb{W}^{\&}=(D\mathbb{S}\cdot\mathbb{Y})^{\&}/2$.
From this relation we can reconstruct $\mathbb{W}$ but not uniquely:
\begin{equation}
\mathbb{W}=D\mathbb{S}\cdot\mathbb{Y}/2+\mathbb{T},
\qquad \mathbb{T}^{\&}=0.
\end{equation}
Actually, such a ``gauge''  nonuniqueness as in (4.6)
is natural for equations like (2.14).

\begin{theorem}
The $\varepsilon$-correction to the action operator in {\rm(2.13)} is given by
\begin{equation}
\mathbb{W}
=D\mathbb{S}\cdot J\cdot(D\mathbb{S})^{\#}/2
+i[\mathbb{S},\mathbb{B}],
\end{equation}
where the operation $\#$ is defined by {\rm(3.3)}.
\end{theorem}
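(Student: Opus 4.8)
The plan is to take the relation (4.6) already obtained just before the statement, $\mathbb{W}=D\mathbb{S}\cdot\mathbb{Y}/2+\mathbb{T}$ with $\mathbb{T}^{\&}=0$, insert the splitting (3.2) of the vector field $\mathbb{Y}$, and then recognize the result as (4.7) using only the calculus of the operations $\#$ and $\&$ recorded just before (3.2a). No new input is needed.

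First I would substitute $\mathbb{Y}=\tilde{\mathbb{Y}}+\underline{\mathbb{Y}}$ with $\tilde{\mathbb{Y}}=J(D\mathbb{S})^{\#}$ and $[\mathbb{S},\underline{\mathbb{Y}}]=0$. By the very definition of $\tilde{\mathbb{Y}}$, the term $D\mathbb{S}\cdot\tilde{\mathbb{Y}}/2$ is exactly $D\mathbb{S}\cdot J\cdot(D\mathbb{S})^{\#}/2$, i.e.\ the first summand in (4.7). Hence it only remains to show that the residual operator
\begin{equation*}
\mathbb{R}\overset{\text{def}}{=}D\mathbb{S}\cdot\underline{\mathbb{Y}}/2+\mathbb{T}
\end{equation*}
is an inner commutator with $\mathbb{S}$.

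The key step is to compute $\mathbb{R}^{\&}$. Since $\mathbb{T}^{\&}=0$ by (4.6), it suffices to treat $(D\mathbb{S}\cdot\underline{\mathbb{Y}})^{\&}$. Because $\underline{\mathbb{Y}}$ commutes with $\mathbb{S}$, the conjugation $e^{it\mathbb{S}}(\cdot)e^{-it\mathbb{S}}$ leaves each component $\underline{\mathbb{Y}}^{j}$ fixed and distributes over the product, so $(D\mathbb{S}\cdot\underline{\mathbb{Y}})^{\&}=(D\mathbb{S})^{\&}\cdot\underline{\mathbb{Y}}$, which vanishes by identity (3.4). Thus $\mathbb{R}^{\&}=0$, and then the identity $i[\mathbb{S},\mathbb{R}^{\#}]=\mathbb{R}-\mathbb{R}^{\&}=\mathbb{R}$ (one of the properties of $\#,\&$ listed before (3.2a)) gives $\mathbb{R}=i[\mathbb{S},\mathbb{B}]$ with $\mathbb{B}\overset{\text{def}}{=}\mathbb{R}^{\#}=(D\mathbb{S}\cdot\underline{\mathbb{Y}}/2+\mathbb{T})^{\#}$. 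Adding back $D\mathbb{S}\cdot\tilde{\mathbb{Y}}/2$ yields precisely (4.7).

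I do not expect a serious obstacle: the substance is just that the ``average-free'' part of $\mathbb{W}$, which is exactly the part the gauge freedom noted after (4.6) cannot touch, is automatically an inner commutator with the action, while the contribution to $\mathbb{W}$ of the fast-spectrum-commuting field $\underline{\mathbb{Y}}$ is average-free thanks to (3.4). The only place needing a little care is the extraction of $\underline{\mathbb{Y}}$ from the conjugation inside the averaging integral (3.5), where $[\mathbb{S},\underline{\mathbb{Y}}]=0$ (equivalently $\underline{\mathbb{Y}}=\mathbb{Y}^{\&}$, (3.2a)) is used essentially; and one should note in passing that the argument is insensitive to which admissible $\underline{\mathbb{Y}}$ is chosen and that $\mathbb{B}$ carries the same ``gauge'' non-uniqueness as $\mathbb{T}$ in (4.6), consistent with the ambiguity of the quantum action.
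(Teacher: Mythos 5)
Your argument is correct and follows essentially the paper's own route: the text establishes $\mathbb{W}^{\&}=(D\mathbb{S}\cdot\mathbb{Y})^{\&}/2$ and hence (4.6), and Theorem 4.1 is exactly the repackaging you perform --- splitting $\mathbb{Y}=\tilde{\mathbb{Y}}+\underline{\mathbb{Y}}$, noting $(D\mathbb{S}\cdot\underline{\mathbb{Y}})^{\&}=(D\mathbb{S})^{\&}\cdot\underline{\mathbb{Y}}=0$ by (3.4), and writing the resulting average-free remainder as $i[\mathbb{S},\mathbb{B}]$ via $i[\mathbb{S},\mathbb{T}^{\#}]=\mathbb{T}-\mathbb{T}^{\&}$. (Only a cosmetic remark: rename your remainder, since $\mathbb{R}$ is already used in (5.3).)
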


\section{Effective adiabatic hamiltonian}

Let us return to formula (2.18).
By using (3.2), (3.10) and (4.7) one can transform the correction
of order $\varepsilon$ as follows:
\begin{equation}
\hat{\mathbb{H}}=f(\hat{\mathbb{S}}_\varepsilon,\hat{\mathbb{X}}_\varepsilon)
-\varepsilon\hat{\mathbb{M}}+O(\varepsilon^2),
\end{equation}
where
\begin{equation}
\mathbb{M}\overset{\text{def}}{=}
(D\mathbb{H}(x)+D f(\mathbb{S},x))\cdot\mathbb{Y}/2
-((D\mathbb{S}\cdot \mathbb{Y})-(D\mathbb{S}\cdot\mathbb{Y})^{\&})
\cdot \partial f(\mathbb{S},x)/2
\end{equation}
and the operators in (5.2) are assumed to be Weyl-symmetrized.

The second summand on the right-hand side of (5.1)
still does not commute with the quantum action $\hat{\mathbb{S}}_\varepsilon$,
but one can apply to it the algebraic averaging procedure.

Let us define the following $\mathcal{A}$-values function on the slow phase space
\begin{equation}
\mathbb{R}(x)=r(\overset{1}{\mathbb{S}(x)},\overset{3}{\mathbb{S}(x)};x)
\overset{2}{\mathbb{M}}{}^{\#},
\end{equation}
where
\begin{equation}
r(s,s';x)= \delta f(s,s';x)^{-1},\qquad
\delta f(s,s';x)\overset{\text{def}}{=}
\int^1_0\partial f(s\mu+s'(1-\mu),x)\,d\mu.
\end{equation}
The function $\delta f$ does not vanish
since $f$ was assumed to be strongly monotone.
The numbers $1$, $2$, $3$ over the operators in (5.3)
are pointing their order from right to left.

\begin{lemma}
The transformation by the unitary operator
\begin{equation}
\hat{\mathbb{V}}_\varepsilon\overset{\text{\rm def}}{=}
\exp\{i\varepsilon\hat{\mathbb{R}}\}
\end{equation}
eliminates the noncommuting part of the second summand
$\hat{\mathbb{M}}$ on the right-hand side of {\rm(5.1)},
namely,
\begin{equation}
\hat{\mathbb{H}}\cdot\hat{\mathbb{V}}_\varepsilon=
\hat{\mathbb{V}}_\varepsilon\cdot
\big(f_\varepsilon(\hat{\mathbb{S}}_\varepsilon,\hat{\mathbb{X}}_\varepsilon)
+O(\varepsilon^2)\big).
\end{equation}
Here the function $f_\varepsilon(s,x)$ is given by
\begin{equation}
f_\varepsilon\overset{\text{\rm def}}{=}
f-\varepsilon(m+y\cdot Df),
\end{equation}
where
\begin{equation}
m(\mathbb{S}(x),x)= \big(D\mathbb{H}(x)\cdot\tilde{\mathbb{Y}}(x)\big)^{\&}/2
=\big(D\mathbb{H}(x)JD\mathbb{S}(x)^{\#}\big)^{\&}/2.
\end{equation}
\end{lemma}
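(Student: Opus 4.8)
The plan is to verify $(5.6)$ by computing, modulo $O(\varepsilon^2)$, the Weyl symbol of the conjugated operator $\hat{\mathbb{V}}_\varepsilon^{-1}\hat{\mathbb{H}}\hat{\mathbb{V}}_\varepsilon$ and checking that it equals $f_\varepsilon(\hat{\mathbb{S}}_\varepsilon,\hat{\mathbb{X}}_\varepsilon)+O(\varepsilon^2)$ with $f_\varepsilon$ as in $(5.7)$. Since $\hat{\mathbb{R}}$ is already $O(1)$ but multiplied by $\varepsilon$ in the exponent $(5.5)$, the conjugation of the leading term $f(\hat{\mathbb{S}}_\varepsilon,\hat{\mathbb{X}}_\varepsilon)$ of $(5.1)$ contributes, to first order in $\varepsilon$, the commutator term
\begin{equation*}
\hat{\mathbb{V}}_\varepsilon^{-1} f(\hat{\mathbb{S}}_\varepsilon,\hat{\mathbb{X}}_\varepsilon)\hat{\mathbb{V}}_\varepsilon
= f(\hat{\mathbb{S}}_\varepsilon,\hat{\mathbb{X}}_\varepsilon)
+ i\varepsilon\,[\hat{\mathbb{R}},\,f(\mathbb{S}(\hat{x}),\hat{x})] + O(\varepsilon^2),
\end{equation*}
while the second summand $-\varepsilon\hat{\mathbb{M}}$ of $(5.1)$ is conjugation-invariant to this order. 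So the whole task reduces to showing that
\begin{equation*}
i[\hat{\mathbb{R}},\,f(\mathbb{S}(\hat{x}),\hat{x})]
= \hat{\mathbb{M}} - \hat{\mathbb{M}}^{\&}\text{-part} + (\text{terms producing } m + y\cdot Df) + O(\varepsilon).
\end{equation*}
More precisely: the role of $\hat{\mathbb{R}}$ is to cancel the \emph{non-$\&$-invariant} part of $\hat{\mathbb{M}}$, leaving behind exactly $\hat{\mathbb{M}}^{\&}$, which (being a function of $\mathbb{S}$) can legitimately be absorbed into $f_\varepsilon$.

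First I would compute $[\mathbb{R},\,\mathbb{H}]$ at the symbol level on $\mathcal{D}$ (i.e. ignoring the $\hat{x}$-noncommutativity, which only contributes at order $\varepsilon$ under the outer $\varepsilon$-factor). Using $\mathbb{H}(x)=f(\mathbb{S}(x),x)$ from $(2.12)$ and the defining formula $(5.3)$ for $\mathbb{R}$ together with $(5.4)$, the key algebraic identity is the ``finite-difference'' relation: for any $\mathcal{A}$-valued $\mathbb{T}$,
\begin{equation*}
i\big[\,r(\overset{1}{\mathbb{S}},\overset{3}{\mathbb{S}})\overset{2}{\mathbb{T}}{}^{\#},\; f(\mathbb{S})\,\big]
= i\big[\mathbb{S},\mathbb{T}^{\#}\big]
= \mathbb{T} - \mathbb{T}^{\&},
\end{equation*}
where the first equality uses $r=\delta f^{-1}$ and the identity $f(s)-f(s') = (s-s')\,\delta f(s,s')$ (so that multiplying by $r$ from the appropriate side converts the commutator with $f(\mathbb{S})$ into the commutator with $\mathbb{S}$), and the second equality is precisely the operation-identity $i[\mathbb{S},\mathbb{T}^{\#}] = \mathbb{T}-\mathbb{T}^{\&}$ listed just after $(3.5)$. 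Applying this with $\mathbb{T}=\mathbb{M}$ gives $i[\mathbb{R},\mathbb{H}] = \mathbb{M}-\mathbb{M}^{\&}$ at the symbol level. Hence the conjugation $(5.6)$ produces $f(\hat{\mathbb{S}}_\varepsilon,\hat{\mathbb{X}}_\varepsilon) - \varepsilon\hat{\mathbb{M}} + \varepsilon(\hat{\mathbb{M}} - \hat{\mathbb{M}}^{\&}) = f(\hat{\mathbb{S}}_\varepsilon,\hat{\mathbb{X}}_\varepsilon) - \varepsilon\hat{\mathbb{M}}^{\&} + O(\varepsilon^2)$, so $f_\varepsilon = f - \varepsilon\,\mathbb{M}^{\&}$ and it remains to identify $\mathbb{M}^{\&}$ with $m + y\cdot Df$.

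The final step is the computation of $\mathbb{M}^{\&}$ from $(5.2)$. Split $\mathbb{Y} = \tilde{\mathbb{Y}} + \underline{\mathbb{Y}}$ as in $(3.2)$, where $\underline{\mathbb{Y}}^j = y^j(\mathbb{S},x)$ commutes with $\mathbb{S}$ by $(3.10)$. In $(5.2)$, applying $\&$: the term $D\mathbb{H}\cdot\mathbb{Y}/2$ splits into $(D\mathbb{H}\cdot\tilde{\mathbb{Y}})^{\&}/2 = m$ (this is exactly $(5.8)$, recalling $\tilde{\mathbb{Y}} = J(D\mathbb{S})^{\#}$, so $D\mathbb{H}\cdot\tilde{\mathbb{Y}} = D\mathbb{H}\,J\,D\mathbb{S}^{\#}$) plus $(D\mathbb{H}\cdot\underline{\mathbb{Y}})^{\&}/2$; since $D\mathbb{H} = Df(\mathbb{S},x) + \partial f(\mathbb{S},x)\,D\mathbb{S}$ and $\underline{\mathbb{Y}}$ commutes with $\mathbb{S}$, while $(D\mathbb{S})^{\&}=0$ by $(3.4)$, the $\partial f\,D\mathbb{S}$ piece contributes $\tfrac12(\partial f)\,(D\mathbb{S}\cdot\underline{\mathbb{Y}})^{\&}$-type terms which must be combined with the second summand of $(5.2)$. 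I expect the main obstacle to be precisely this bookkeeping: tracking the $\partial f$-weighted pieces coming from $D\mathbb{H}\cdot\mathbb{Y}/2$ against the term $-(D\mathbb{S}\cdot\mathbb{Y} - (D\mathbb{S}\cdot\mathbb{Y})^{\&})\partial f/2$ in $(5.2)$, using the operation identities $(\mathbb{T}^{\&})^{\&}=\mathbb{T}^{\&}$ and $(\mathbb{T}^{\#})^{\&}=0$, and showing the $\partial f$-terms collapse so that what survives in $\mathbb{M}^{\&}$ beyond $m$ is exactly $Df(\mathbb{S},x)\cdot\underline{\mathbb{Y}}/2 \cdot 2 = y\cdot Df$ — the factor-of-two reconciliation and the vanishing of the unwanted $\partial f$-contributions (via $(D\mathbb{S})^{\&}=0$ and $\underline{\mathbb{Y}}=y(\mathbb{S},x)$) being the delicate point. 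Once $\mathbb{M}^{\&} = m + y\cdot Df$ is established, $(5.7)$ follows and the lemma is proved.
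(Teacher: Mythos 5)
You take essentially the same route as the paper: conjugate (5.1) by $\hat{\mathbb{V}}_\varepsilon$, use the fact that $\mathbb{R}$ in (5.3) is built precisely so that its first-order commutator with $f(\mathbb{S})$ reduces, via $r=\delta f^{-1}$, to $\pm i[\mathbb{S},\mathbb{M}^{\#}]=\pm(\mathbb{M}-\mathbb{M}^{\&})$ and cancels the fluctuating part of $\hat{\mathbb{M}}$, and then identify $\mathbb{M}^{\&}=m+y\cdot Df$. The paper writes the same first-order term as a shift $\hat{\mathbb{S}}_\varepsilon\to\hat{\mathbb{S}}_\varepsilon+i\varepsilon[\widehat{\mathbb{S},\mathbb{R}}]$ of the first argument of $f$, expanded with the difference derivative $\delta f$ --- an equivalent formulation of your commutator computation.

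Two loose ends. (i) You have a pair of compensating sign slips: the correct expansions are $\hat{\mathbb{V}}_\varepsilon^{-1}A\,\hat{\mathbb{V}}_\varepsilon=A-i\varepsilon[\hat{\mathbb{R}},A]+O(\varepsilon^2)$ and $i\big[r(\overset{1}{\mathbb{S}},\overset{3}{\mathbb{S}})\overset{2}{\mathbb{T}}{}^{\#},f(\mathbb{S})\big]=i[\mathbb{T}^{\#},\mathbb{S}]=-(\mathbb{T}-\mathbb{T}^{\&})$; the two wrong signs multiply out so that your conclusion $f-\varepsilon\mathbb{M}^{\&}$ is nevertheless correct. (ii) The ``delicate bookkeeping'' you defer in computing $\mathbb{M}^{\&}$ is in fact immediate and does not require playing the two summands of (5.2) against each other. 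The second summand of (5.2) has the form $(\mathbb{T}-\mathbb{T}^{\&})\cdot\partial f/2$ and is annihilated outright by the averaging, since $(\mathbb{T}-\mathbb{T}^{\&})^{\&}=0$ and $\partial f(\mathbb{S},x)$ passes through $e^{it\mathbb{S}}$. In the first summand --- which, note, is $(D\mathbb{H}+Df)\cdot\mathbb{Y}/2$ rather than $D\mathbb{H}\cdot\mathbb{Y}/2$ as you quote it --- the cross term $(\partial f\,D\mathbb{S}\cdot\underline{\mathbb{Y}})^{\&}$ vanishes directly because $\partial f$ and $\underline{\mathbb{Y}}=y(\mathbb{S},x)$ commute with $\mathbb{S}$ and $(D\mathbb{S})^{\&}=0$ by (3.4). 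What survives is $(D\mathbb{H}\cdot\tilde{\mathbb{Y}})^{\&}/2+Df\cdot\underline{\mathbb{Y}}/2+Df\cdot\underline{\mathbb{Y}}/2=m+y\cdot Df$, which is exactly the chain of equalities in the paper's proof; this settles the factor-of-two reconciliation you flag as the delicate point.
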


\begin{proof}
Obviously, from (5.1) we have
\begin{equation}
\hat{\mathbb{V}}_\varepsilon^{-1}\cdot\hat{\mathbb{H}}\cdot\hat{\mathbb{V}}_\varepsilon
=f(\hat{\mathbb{S}}_\varepsilon
+i\varepsilon[\widehat{\mathbb{S},\mathbb{R}}],\hat{\mathbb{X}}_\varepsilon)
-\varepsilon\hat{\mathbb{M}}/2+O(\varepsilon^2),
\end{equation}
where all operators on the right-hand side are Weyl-symmetrized.
Now the first summand on the right can be expanded in $\varepsilon$:
\begin{equation*}
f(\hat{\mathbb{S}}_\varepsilon
+i\varepsilon[\widehat{\mathbb{S},\mathbb{R}}],\hat{\mathbb{X}}_\varepsilon)
=f(\hat{\mathbb{S}}_\varepsilon,\hat{\mathbb{X}}_\varepsilon)
+i\varepsilon\delta
f(\overset{1}{\hat{\mathbb{S}}},\overset{3}{\hat{\mathbb{S}}};\overset{4}{\hat{x}})
[\overset{2}{\widehat{\mathbb{S},\mathbb{R}}}]
+O(\varepsilon^2).
\end{equation*}
If we substitute this expansion into (5.9), then the summands of order $\varepsilon$
are combined into $\widehat{\mathbb{M}^{\&}}/2$
since $\mathbb{R}$ (5.3) satisfies the identity
$$
i\delta f(\overset{1}{\mathbb{S}(x)},\overset{3}{\mathbb{S}(x)};x)
[\overset{2}{\mathbb{S}(x),\mathbb{R}(x)}]
=(\mathbb{M}-\mathbb{M}^{\&})/2.
$$
From (5.2), (3.2), (3.4) we see that
\begin{align*}
\mathbb{M}^{\&}
&=(D\mathbb{H}\cdot\mathbb{Y})^{\&}/2 +Df\cdot\mathbb{Y}^{\&}/2
\\
&=(D\mathbb{H}\cdot\tilde{\mathbb{Y}})^{\&}/2
+D\mathbb{H}^{\&}\cdot \underline{\mathbb{Y}}/2 +Df\cdot \underline{\mathbb{Y}}/2
\\
&=(D\mathbb{H}\cdot\tilde{\mathbb{Y}})^{\&}/2+Df\cdot\underline{\mathbb{Y}},
\end{align*}
and therefore it follows from (3.10), (5.8) that
$$
\mathbb{M}^{\&}=m(\mathbb{S},x) + y(\mathbb{S},x)\cdot Df(\mathbb{S},x).
$$
Thus, we derive the expression for the $\varepsilon$-order summand in (5.7).
The lemma is proved.
\end{proof}

Formula (5.6) finalizes the quantum adiabatic approximation procedure up to $O(\varepsilon^2)$.
Since the quantum coordinates $\hat{\mathbb{X}}_\varepsilon$ commute with the quantum action
$\hat{\mathbb{S}}_\varepsilon$, we can sit onto the $k$th eigensubspace of this action.

\begin{theorem}
{\rm(a)} Let the quantum action operator
$\hat{\mathbb{S}}_\varepsilon=\mathbb{S}_\varepsilon(\hat{x})$
be determined by {\rm(2.13), (4.7)}.
Then its eigenvalues, up to $O(\varepsilon^2)$, are integer:
$k=O(\varepsilon^2)$, $k=0,1,2,\dots$.

{\rm{b}}
By the unitary transformation {\rm(5.5)}
and by the restriction to the $k$th eigensubspace of the quantum action
$\hat{\mathbb{S}}_\varepsilon$,
the original Hamiltonian $\hat{\mathbb{H}}$ is transformed to
the following adiabatic effective Hamiltonian
\begin{equation}
f_\varepsilon(k,\hat{\mathbb{X}}_\varepsilon)=\lambda_k(\hat{\mathbb{X}}_\varepsilon)
+\varepsilon\mu_k(\hat{\mathbb{X}}_\varepsilon)+O(\varepsilon^2).
\end{equation}
Here $\lambda_k$ {\rm(2.11)} is the leading adiabatic term, and
\begin{equation}
\mu_k(x)\overset{\text{\rm def}}{=} -m(k,x)-y(k,x) D\lambda_k,
\end{equation}
where the function $m$ is determined from {\rm(5.8)}
and the vector $y$ is determined from {\rm(3.18), (3.17), (3.15), (3.14)}.
\end{theorem}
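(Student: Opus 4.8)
The plan is to assemble Theorem~5.1 directly from the three preceding structural results: Theorem~3.1 (the deformed coordinates $\hat{\mathbb{X}}_\varepsilon$ commute with $\hat{\mathbb{S}}_\varepsilon$ up to $O(\varepsilon^2)$ and are canonical up to $O(\varepsilon^3)$), Theorem~4.1 (the explicit $\varepsilon$-correction $\mathbb{W}$ making the quantum action $2\pi$-periodic), and Lemma~5.1 (the unitary $\hat{\mathbb{V}}_\varepsilon$ that removes the non-commuting part of the correction and produces $f_\varepsilon(\hat{\mathbb{S}}_\varepsilon,\hat{\mathbb{X}}_\varepsilon)+O(\varepsilon^2)$). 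Part~(a) is almost immediate: by construction (2.13), (4.7) the operator $\hat{\mathbb{S}}_\varepsilon$ satisfies $\exp\{2\pi i\hat{\mathbb{S}}_\varepsilon\}=\mathbf{1}+O(\varepsilon^2)$, hence its spectrum lies within $O(\varepsilon^2)$ of $2\pi\mathbf{Z}$ normalized to $\mathbf{Z}$; I would spell out that the spectral-mapping argument applied to the self-adjoint $\hat{\mathbb{S}}_\varepsilon$ forces each eigenvalue to be an integer $k$ modulo $O(\varepsilon^2)$, and that the integers realized are exactly $\{0,1,2,\dots\}$ because the leading symbol $\mathbb{S}(x)=\sum_k k\,\mathbb{P}_k(x)$ already has that spectrum and the $O(\varepsilon)$ perturbation cannot create or destroy bands.

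For part~(b), I would start from (5.6) in Lemma~5.1, which is an operator identity valid up to $O(\varepsilon^2)$ on the full fast--slow space. Since $\hat{\mathbb{X}}_\varepsilon$ commutes with $\hat{\mathbb{S}}_\varepsilon$ up to $O(\varepsilon^2)$ by Theorem~3.1, and $f_\varepsilon(s,x)$ is a function of a scalar variable $s$ placed in the action slot, the operator $f_\varepsilon(\hat{\mathbb{S}}_\varepsilon,\hat{\mathbb{X}}_\varepsilon)$ leaves the $k$th eigensubspace of $\hat{\mathbb{S}}_\varepsilon$ invariant up to $O(\varepsilon^2)$. Restricting (5.6) to that eigensubspace replaces $\hat{\mathbb{S}}_\varepsilon$ by the scalar $k$ (using part~(a), so that $f_\varepsilon(\hat{\mathbb{S}}_\varepsilon,\cdot)=f_\varepsilon(k,\cdot)+O(\varepsilon^2)$ there, the $\varepsilon$-prefactor in $f_\varepsilon$ absorbing the $O(\varepsilon^2)$ spectral defect into $O(\varepsilon^3)$), yielding $f_\varepsilon(k,\hat{\mathbb{X}}_\varepsilon)+O(\varepsilon^2)$ as the effective Hamiltonian. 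Then I expand $f_\varepsilon$ using its definition (5.7): the leading term $f(k,\hat{\mathbb{X}}_\varepsilon)=\lambda_k(\hat{\mathbb{X}}_\varepsilon)$ by (2.11), and the $\varepsilon$-term is $-\varepsilon\big(m(k,\hat{\mathbb{X}}_\varepsilon)+y(k,\hat{\mathbb{X}}_\varepsilon)\cdot Df(k,\hat{\mathbb{X}}_\varepsilon)\big)$; since at $s=k$ one has $Df(k,x)=D\lambda_k(x)$, this is exactly $\varepsilon\mu_k(\hat{\mathbb{X}}_\varepsilon)$ with $\mu_k$ as in (5.12). Here one must note that to the order worked, $\lambda_k$ and $\mu_k$ may be Weyl-quantized in $\hat{\mathbb{X}}_\varepsilon$ using the canonical bracket $J$, since by Theorem~3.1 those coordinates obey (3.19) up to $O(\varepsilon^3)$, so ordering ambiguities in $\mu_k$ are $O(\varepsilon^2)$ in the Hamiltonian and harmless.

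Finally I would record that the chain ``conjugate by $\hat{\mathbb{V}}_\varepsilon$, then project onto the $k$th eigenspace of $\hat{\mathbb{S}}_\varepsilon$'' is the promised reduction, and that the functions entering $\mu_k$ are the ones constructed earlier: $m$ from (5.8) (equivalently $\big(D\mathbb{H}\,J\,(D\mathbb{S})^{\#}\big)^{\&}/2$), and $y=J\theta$ with $\theta$ a primitive of the Berry--Simon-type curvature $\Omega$ obtained from (3.14)--(3.17). The main obstacle I anticipate is bookkeeping of the error orders across the two successive $O(\varepsilon^2)$ operations: one has to be sure that projecting onto an eigenspace of $\hat{\mathbb{S}}_\varepsilon$ --- whose spectral projections are themselves only defined up to $O(\varepsilon)$ --- does not degrade the $O(\varepsilon^2)$ accuracy of (5.6), and that the commutator $[\hat{\mathbb{X}}_\varepsilon,\hat{\mathbb{S}}_\varepsilon]=O(\varepsilon^2)$ is good enough to make $f_\varepsilon(\hat{\mathbb{S}}_\varepsilon,\hat{\mathbb{X}}_\varepsilon)$ genuinely block-diagonal to the claimed order rather than merely approximately so; this is a standard but slightly delicate adiabatic-gap argument, and I would handle it by noting that $f_\varepsilon$ depends on $\hat{\mathbb{S}}_\varepsilon$ only through a scalar function, so the off-diagonal leakage is a single commutator $[f_\varepsilon(\hat{\mathbb{S}}_\varepsilon,\hat{\mathbb{X}}_\varepsilon),\hat{\mathbb{S}}_\varepsilon]$, which is $O(\varepsilon)\cdot[\hat{\mathbb{X}}_\varepsilon,\hat{\mathbb{S}}_\varepsilon]=O(\varepsilon^3)$, hence negligible.
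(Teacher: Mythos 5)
Your proposal is correct and takes essentially the same route as the paper, which presents Theorem~5.1 as the immediate assembly of Lemma~5.1 (identity (5.6)), Theorem~3.1 (commutation of $\hat{\mathbb{X}}_\varepsilon$ with $\hat{\mathbb{S}}_\varepsilon$), and the periodicity (2.14) guaranteed by Theorem~4.1, followed by restriction to the $k$th eigensubspace and the expansion $f_\varepsilon(k,x)=\lambda_k(x)-\varepsilon\bigl(m(k,x)+y(k,x)\cdot D\lambda_k(x)\bigr)$. One minor slip: the off-diagonal leakage $[f_\varepsilon(\hat{\mathbb{S}}_\varepsilon,\hat{\mathbb{X}}_\varepsilon),\hat{\mathbb{S}}_\varepsilon]$ is only $O(\varepsilon^2)$, not $O(\varepsilon^3)$, because the leading term $f$ carries no $\varepsilon$ prefactor and $Df=O(1)$ --- but $O(\varepsilon^2)$ is exactly the accuracy the theorem claims, so your argument still closes.
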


The noncommutative-deformed slow coordinates $\hat{\mathbb{X}}^j_\varepsilon$
obey ($\operatorname{mod}O(\varepsilon^2)$) the Heisenberg commutation relations
(3.19) similar to~(2.4)
and so the effective Hamiltonian (5.9) is just equivalent to expression (2.8)
over the algebra (2.4).
In this sense, one can say that
$\hat{\mathbb{H}}$ is ``reduced'' to $\hat{\lambda}_k+\varepsilon\hat{\mu}_k+O(\varepsilon^2)$.
But
it is important to stress that the deformed coordinates
$\hat{\mathbb{X}}^j_\varepsilon$ in (5.10)
are essentially different from the original slow coordinates $\hat{x}^j$ (2.3).
The exchange $\hat{x}\to\hat{\mathbb{X}}_\varepsilon$ appeared in (5.10)
reflects an important effect, namely,
the noncommutative deformation of slow phase manifold accompanying
the separation of adiabatic bands in the system.

Note that the deformation of the slow phase space
and the deformation of the adiabatic terms $\lambda_k$ in (5.10) are
algebraically implemented by variations of the fast action $\mathbb{S}$
and do not depend on concrete realizations of the fast Hilbert space,
i.e., on the concrete choice of fast eigenvectors.
This makes our ``action'' approach consistent with
the Hamiltonian adiabatic approximation
schemes used in classical mechanics.

\section*{Appendix A}

\noindent{\bf Lemma A.1}
{\it
The following formulas for quantum composite function holds:
$a(\hat{b})=\hat{c}$.
Here
$c(x)=a(B)1(x)$,
the function $1(x)$ just identically equals the unity,
and the operator $B$ is determined as follows:
$$
B\overset{\text{\rm def}}{=}b\Big(x-\frac{i\varepsilon}{2}JD\Big)
=b(x)-\frac{i\varepsilon}{2}Db\cdot JD+O(\varepsilon^2).
$$}
\medskip

This general lemma implies
\begin{equation}
f(\hat{\mathbb{S}})=\hat{\mathbb{F}},\qquad
\mathbb{F}(x)=f\Big(\mathbb{S}(x-i\varepsilon JD/2)\Big)1(x).
\tag{A.1}
\end{equation}
To simplify the notation,
we here ignore the dependence of the functions $f$ (2.12)
on the coordinates $x$.
Let us expand the operator-values symbol $\mathbb{F}$ in (A.1)
into the power series in the small parameter $\varepsilon$:
\begin{equation}
\mathbb{F}=f\big(\mathbb{S}-i\varepsilon D\mathbb{S}\cdot J\cdot D/2
+O(\varepsilon^2)\big)1
=f(\mathbb{S})-i\varepsilon \overset{2}{L}\cdot
\delta f(\overset{3}{\mathbb{S}},\overset{1}{\mathbb{S}})1
+O(\varepsilon^2),
\tag{A.2}
\end{equation}
where
$L\overset{\text{def}}{=}D\mathbb{S}\cdot J\cdot D/2$
and the numbers 1,2,3 over the operators point their order.

Now we transfer the differential operators $D$ staying in the expressions for $L$
to the very right place:
\begin{equation}
\overset{2}{L}\cdot\delta f(\overset{3}{\mathbb{S}},\overset{1}{\mathbb{S}})
=J^{jl}\overset{2}{D_j\mathbb{S}}\cdot
\delta f(\overset{3}{\mathbb{S}},\overset{1}{\mathbb{S}})\cdot D_l/2
+J^{jl}\overset{4}{D_j\mathbb{S}}\cdot\overset{2}{D_l\mathbb{S}}\cdot
\delta^2f(\overset{5}{\mathbb{S}},\overset{3}{\mathbb{S}},\overset{1}{\mathbb{S}})/2.
\tag{A.3}
\end{equation}
Here by $\delta^2f$ we denote the second difference derivative
of the function $f$:
$$
\delta^2f(s,s',s'')\overset{\text{def}}{=}
\int^1_0d\tau \int^{\tau}_0d\mu\,\partial^2
f(s(1-\tau)+s'(\tau-\mu)+s''\mu).
$$

By applying (A.3) to the unity function $1(x)$ we obtain from (A.2):
$$
\mathbb{F}=f(\mathbb{S})-i\varepsilon
\overset{4}{D\mathbb{S}}\cdot J\cdot\overset{2}{D\mathbb{S}}\cdot
\delta^2f(\overset{5}{\mathbb{S}},\overset{3}{\mathbb{S}},\overset{1}{\mathbb{S}})/2
+O(\varepsilon^2)
$$
Then by using (3.1), we introduce the vector field $\mathbb{Y}$ into this expression:
\begin{align}
\mathbb{F}=f(\mathbb{S})-\varepsilon\overset{4}{D\mathbb{S}}\cdot
[\overset{2}{\mathbb{Y},\mathbb{S}}]\cdot
\delta^2f(\overset{5}{\mathbb{S}},\overset{3}{\mathbb{S}},\overset{1}{\mathbb{S}})/2
+O(\varepsilon^2)
\nonumber\\
=f(\mathbb{S})-\varepsilon
\big(\overset{2}{(D\mathbb{S}\cdot \mathbb{Y})}\cdot
\delta f(\overset{3}{\mathbb{S}},\overset{1}{\mathbb{S}})
-D(f(\mathbb{S}))\cdot\mathbb{Y}\big)/2.
\tag{A.4}
\end{align}
As the last step here we have used the general formulas
$$
[\mathbb{Y},g(\mathbb{S})]
=[\overset{2}{\mathbb{Y},\mathbb{S}}]\cdot
\delta g(\overset{3}{\mathbb{S}},\overset{1}{\mathbb{S}}),\qquad
D(f(\mathbb{S}))=\overset{2}{D\mathbb{S}}\cdot
\delta g(\overset{3}{\mathbb{S}},\overset{1}{\mathbb{S}}).
$$

From (A.4) and (A.1) one derives
$$
\widehat{f(\mathbb{S})}=f(\hat{\mathbb{S}})
+\varepsilon
\big((\widehat{D\mathbb{S}\cdot \mathbb{Y}})\partial f(\hat{\mathbb{S}})
-\widehat{D(f(\mathbb{S}))}\cdot\hat{\mathbb{Y}}\big)/2
+O(\varepsilon^2),
$$
where on the right-hand side the operators $D\mathbb{S}\cdot \mathbb{Y}$
and $\mathbb{S}$ are Weyl-symmetrized.

If we recall that the function $f$ actually depends on the coordinates $x$,
then this changes the reminder $O(\varepsilon^2)$ only
in the later formula:
\begin{equation}
\hat{\mathbb{H}}=\widehat{f(\mathbb{S},x)}
=f(\hat{\mathbb{S}},\hat{x})+\varepsilon\hat{\mathbb{Q}}+O(\varepsilon^2),
\tag{A.5}
\end{equation}
where
\begin{equation}
\mathbb{Q}=(D\mathbb{S}\cdot\mathbb{Y})\cdot\partial f(\mathbb{S},x)/2
-D\mathbb{H}\cdot \mathbb{Y}/2+Df(\mathbb{S},x)\mathbb{Y}/2.
\tag{A.6}
\end{equation}
All the operators in (A.5) and (A.6) are assumed to be Weyl-symmetrized.

The obtained formula (A.5) was used in (2.15).

\section*{Acknowledgments}

This study (research grant N 12-01-0140) was supported by The National
Research University - Higher School of Economics; Academic Fund Programm.


\begin{thebibliography}{55}

\bibitem{1}
M.~Born,
\textit{Lectures in Atomic Mechanics}
(ONTI, Moscow, 1934).

\bibitem{2}
L.~Shiff,
\textit{Quantum Mechanics}
(Inostr. Lit., Moscow, 1957).

\bibitem{3}
V.~P.~Maslov,
\textit{Theory of Perturbations and Asymptotic Methods}
(Moscow Univ. Publ. House, Moscow, 1965).

\bibitem{4}
V.~P.~Maslov and M.~V.~Fedoriuk,
\textit{Semiclassical Approximation in Quantum Mechanics}
(D.~Reidel, Dordrecht, 1981).

\bibitem{5}
B.~Simon,
``Holonomy, the quantum adiabatic theorem and Berry's phase,''
Phys. Rev. Lett. \textbf{5} (1), 2167--2170, 1983.

\bibitem{6}
M.~V.~Berry,
``Quantal phase factors accompanying adiabatic changes,''
Proc. R. Soc. London Ser.~A \textbf{392}, 45--57 (1984).

\bibitem{7}
K.~Yu.~Bliokh and Yu.~P.~Bliokh,
Spin gauge fields: from Berry phase to topological spin transport and Hall effects,''
Ann Physics \textbf{319} (1), 13--47 (2005).

\bibitem{8}
P.~Gosselin, A.~B\'erard, and H.~Mohrbach,
``Semiclassical diagonalization of quantum Hamiltonian
and equations of motion with Berry phase corrections,''
Eur. Phys. Jour.~B \textbf{58}, 137 (2007).

\bibitem{9}
P.~Gosselin, J.~Hanssen, H.~Mohrbach,
``Recursive diagonalization of quantum Hamiltonians to all orders in $h$,''
Phys. Rev.~D \textbf{77}, 085008 (2008).

\bibitem{10}
R.~G.~Littlejohn and W.~G.~Flynn,
``Geometric phases in the asymptotic theory of coupled wave equations,''
Phys. Rev.~A \textbf{44}, 5239--5256 (1991).

\bibitem{11}
C.~Emmrich and A.~Weinstein.
``Geometry of the transport equation in multicomponent WKB approximations,''
Comm. Math. Phys. \textbf{176}, 701--711 (1996).

\bibitem{12}
A.~I.~Neishtadt,
``The separation of motions in systems with rapidly rotating phase,''
J. Appl. Math. Mech. \textbf{48} (2), 133--139 (1984).

\bibitem{13}
A.~I.~Neishtadt,
``Averaging method and adiabatic invariants,''
in \textit{Hamiltonian Dynamical Systems and Applications},
Ed. by W.~Craig
(Springer Verlag, 2008), pp.~53--66.

\bibitem{14}
D.~V.~Treschev,
``The continuous averaging method in the problem of separation of fast and slow motions,''
Regular Chaotic Dyn. \textbf{2} (3/4), 9--20 (1997).

\bibitem{15}
R.~Montgomery,
``The connection whose holonomy is the classical adiabatic angles
of Hannay and Berry and its generalization to the non-integrable case,''
Commun. Math. Phys. \textbf{120}, 269--294 (1988).

\bibitem{16}
J.~E.~Marsden, R.~Montgomery, and T.~Ratiu,
``Reduction, symmetry and phases in mechanics,''
Mem. Am. Math. Soc. \textbf{88} (436), 1--110 (1990).

\bibitem{17}
M.~Avendano-Camacho, J.~A.~Vallejo and Yu.~Vorobiev,
``Higher order corrections to adiabatic invariants
of generalized slow-fast Hamiltonian systems,''
J. Math. Phys. \textbf{54}, 082704 (2013).

\bibitem{18}
M.~V.~Karasev,
``New global asymptotics and anomalies for the problem
of quantization of the adiabatic invariant,''
Funct. Anal. Appl. \textbf{24}, 104--114 (1990).


\end{thebibliography}
\end{document}